\documentclass[letterpaper, 10 pt, conference]{ieeeconf}

% THEOREMS -------------------------------------------------------
\newtheorem{theorem}{Theorem}

\newtheorem{example}[theorem]{Example}

\newtheorem{lemma}[theorem]{Lemma}

\newtheorem{problem}[theorem]{Problem}

\newtheorem{remark}[theorem]{Remark}

% MATH -----------------------------------------------------------

\IEEEoverridecommandlockouts
\overrideIEEEmargins

%\setcounter{MaxMatrixCols}{10}

% The following packages can be found on http:\\www.ctan.org
\usepackage{graphicx} % for pdf, bitmapped graphics files
\usepackage{graphics}
\usepackage{epsfig}
\graphicspath{{figures/}}

\usepackage{amsmath}
\usepackage{amssymb}

\usepackage{algorithm}
\usepackage{algorithmic}
\usepackage{multirow}
\usepackage{rotating}
\usepackage{subfigure}
\usepackage{color}
\usepackage[space]{cite}
\usepackage{mathptmx}
\usepackage{subfigure}
\usepackage{times}

\begin{document}

\title{{\LARGE \textbf{Spectral Control of Mobile Robot Networks}}}
\author{Michael~M.~Zavlanos,~Victor~M.~Preciado and Ali Jadbabaie\thanks{%
Michael M. Zavlanos is with the Dept. of Mechanical Engineering, Stevens
Institute of Technology, Hoboken, NJ 07030, USA, \texttt{\footnotesize %
michael.zavlanos@stevens.edu}. Victor. M. Preciado and Ali Jadbabaie are
with the Dept. of Electrical and Systems Engineering, University of
Pennsylvania, Philadelphia, PA 19104, USA, \texttt{\footnotesize %
\{preciado,jadbabai\}@seas.upenn.edu}.}}
\maketitle

\begin{abstract}
The eigenvalue spectrum of the adjacency matrix of a network is closely
related to the behavior of many dynamical processes run over the network. In
the field of robotics, this spectrum has important implications in many
problems that require some form of distributed coordination within a team of
robots. In this paper, we propose a continuous-time control scheme that
modifies the structure of a position-dependent network of mobile robots so that it achieves a
desired set of adjacency eigenvalues. For this, we employ a novel abstraction of the
eigenvalue spectrum by means of the adjacency matrix spectral moments. Since the eigenvalue spectrum is uniquely determined by its spectral moments, this abstraction provides a way to indirectly control the eigenvalues of the network. Our construction is based on artificial potentials that capture the distance of the network's spectral moments to their desired values. Minimization of these potentials is via a gradient descent closed-loop system that, under certain convexity assumptions, ensures convergence of the network topology to one with the desired set of moments and, therefore, eigenvalues. We illustrate our approach in nontrivial
computer simulations.
\end{abstract}

%%%%%%%%%%%%%%%%%%%%%%%%%%%%%%%%%%%%%%%%%%%%%%%%%%%%%%%%%%%%%%%%%%%%%%%%%

%%%%%%%%%%%%%%%%%%%%%%%%%%%%%%%%%%%%%%%%%%%%%%%%%%%%%%%%%%%%%%%%%%%%%%%%%

\section{Introduction}

\label{sec_introduction}

A wide variety of coordinated tasks performed by teams of mobile robots
critically rely on the topology of the underlying communication network and
its spectral properties. Examples include, cooperative manipulation \cite%
{Lynch1996, Mataric1995, Wang2002}, surveillance and coverage \cite%
{Cortes2004, Parker2002, Poduri2004}, distributed averaging \cite%
{Jadbabaie2003, Saber2007, Ren2005}, formation control \cite{Desai2001,
Lawton2005, Mesbahi2001}, flocking \cite{Saber2006, Tanner2007}, and
multi-robot placement \cite{Smith2007, Zavlanos2008, Kloder2006}, that all
require some form of network connectivity, structure and, oftentimes,
spectral properties. In this paper we address the problem of controlling a
network of mobile robots to a topology with a desired eigenvalue spectrum.
This effort is a first step towards the design of controllers that allow
robots to perform their assigned tasks, while optimizing coordination within
the team.

The eigenvalue spectra of a network provide valuable information regarding
the behavior of many dynamical processes running within the network \cite%
{P08}. For example, the eigenvalue spectra of the Laplacian and adjacency
matrices of a graph affects the mixing speed of Markov chains \cite{Ald82},
the stability of synchronization of a network of nonlinear oscillators \cite%
{PC98,PV05}, the spreading of a virus in a network \cite{DGM08,PJ09}, as
well as the dynamical behavior of many decentralized network algorithms \cite%
{L97}. Similarly, the second smallest eigenvalue of the Laplacian matrix
(also called spectral gap) is broadly considered a critical parameter that
influences the stability and robustness properties of dynamical systems that
are implemented over information networks \cite{OM04,FM04}. Optimization of
the spectral gap has been studied both in a centralized 
\cite{GMS90,GB06,KM06} and decentralized context \cite{DJ06}.

In this paper, we propose a novel framework to control the structure of a
network of mobile robots to achieve a desired eigenvalue spectrum. In
particular, we focus on the spectrum of the weighted adjacency matrix of the
network, with weights that are (decreasing) functions of the inter-robot
distances. This construction is relevant, for example, in modeling the
signal strength in wireless communication networks. Although our framework
performs well with different distance metrics, in this paper, we focus on
the $\ell _{1}$ norm (Manhattan distance) primarily for analytical reasons,
since its composition with convex functions preserves their convexity.
Additionally, this metric has potential applications in indoor navigation
where the presence of obstacles forces the signals to propagate in grid-like
environments.

We employ a novel abstraction of the eigenvalue spectrum in terms of the associated spectral moments, and define artificial potentials that capture the distance
between the network's spectral moments and their desired values. These potentials are minimized via a gradient descent
algorithm, for which we show convergence to the globally optimal moments. Since the eigenvalue spectrum is uniquely determined by the associated spectral moments, our approach provides a way to indirectly control a network's eigenvalues. This work is related to \cite%
{Preciado2010}, which addresses a similar problem for static robots in
discrete environments. This formulation, however, is more
appropriate for robotics applications, where communication depends
continuously on the robot motion.

The rest of this paper is organized as follows. In Section~\ref{sec_
notation}, we introduce some graph-theoretical notation and useful results.
In Section \ref{sec_problem}, we formulate the control problem under
consideration. We introduce an artificial potential and derive the
associated motion controllers in Section \ref{sec_control} and discuss
convergence of our approach in Section \ref{sec_convergence}. Finally, in
Section~\ref{sec_simulations}, we illustrate our approach with several
computer simulations.

%%%%%%%%%%%%%%%%%%%%%%%%%%%%%%%%%%%%%%%%%%%%%%%%%%%%%%%%%%%%%%%%%%%%%%%%%

\section{Preliminaries \& Problem Definition}

\label{sec_problem_preliminaries}

%%%%%%%%%%%%%%%%%%%%%%%%%%%%%%%%%%%%%%%%%%%%%%%%%%%%%%%%%%%%%%%%%%%%%%%%%

\subsection{Notation and Preliminaries}

\label{sec_ notation}

In this section we introduce some nomenclature and results needed in our
exposition. Let $\mathcal{G}=\left( \mathcal{V},\mathcal{E},\mathcal{W}%
\right) $ denote a weighted undirected graph, with $\mathcal{V}=\left[ n%
\right] $ being a set of $n$ nodes, $\mathcal{E}\subseteq \mathcal{V}\times 
\mathcal{V}$ a set of $e$ undirected edges, and $\mathcal{W}\in \mathbb{R}%
_{+}^{e}$ a set of weights associated to the edges. If $\left\{ i,j\right\}
\in \mathcal{E}$ we call nodes $i$ and $j$ \emph{adjacent} (or neighbors),
which we denote by $i\sim j$. In this paper, we consider graphs without
self-loops, i.e., $\left\{ i,i\right\} \not\in \mathcal{E}$ for all $i\in 
\mathcal{V}$. We denote by $a_{ij}=a_{ji}\in \mathbb{R}_{+}$ the weight
associated with edge $\left\{ i,j\right\} \in \mathcal{E}$, and assume that $%
a_{ij}=0$ for $\left\{ i,j\right\} \not\in \mathcal{E}$. We define a \emph{%
walk} $\mathbf{w}$ of length $k$ from node $v_{0}$ to $v_{k}$ to be an
ordered sequence of nodes $\left( v_{0},v_{1},...,v_{k}\right) $ such that $%
v_{i}\sim v_{i+1}$ for $i=0,1,...,k-1$. We define the weight $a_{\mathbf{w}}$
of a walk $\mathbf{w}=\left( v_{0},v_{1},...,v_{k}\right) $ as $a_{\mathbf{w}%
}=\prod_{i=0}^{k-1}a_{v_{i},v_{i+1}}$.

A weighted graph can be algebraically represented via its weighted \emph{%
adjacency} matrix, defined as the $n\times n$ symmetric matrix $A_{\mathcal{G%
}}=[a_{ij}]$, where $a_{ij}$ is the weight of edge $\left\{ i,j\right\} $.
In this paper we are particularly interested in the spectral properties of $%
A_{\mathcal{G}}$. Since $A_{\mathcal{G}}$ is a real symmetric matrix, it has
a set of real eigenvalues $\lambda _{1}\geq \lambda _{2}\geq ...\geq \lambda
_{n}$. The powers of the adjacency matrix $A_{\mathcal{G}}$ can be related to walks
in $\mathcal{G}$:

\begin{lemma}
\label{lem_Biggs} Let $\mathcal{G}$ be a weighted undirected graph with no
self-loops. The $\left( i,j\right) $-th entry of $A_{\mathcal{G}}^{k}$ can
be written in terms of walks in $\mathcal{G}$ as follows:%
\begin{equation*}
\left[ A_{\mathcal{G}}^{k}\right] _{ij}=\sum_{\mathbf{w}\in W_{i,j}^{\left(
k\right) }}a_{\mathbf{w}},
\end{equation*}%
where $W_{i,j}^{\left( k\right) }$ is the set of all closed walks of length $%
k$ from node $i$ to $j$ in the complete graph $K_{n}\footnote{%
The complete graph $K_{n}$, is the undirected graph with $n$ nodes in which
every pair of distinct vertices is connected by a unique edge.}$.
\end{lemma}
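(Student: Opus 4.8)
The plan is to prove the identity by induction on the walk length $k$, using the definition of matrix multiplication together with the standing convention that $a_{ij}=0$ whenever $\{i,j\}\not\in\mathcal{E}$. This convention is precisely what lets the sum range over walks in the complete graph $K_{n}$ rather than only over walks that stay inside $\mathcal{G}$: any walk that traverses a non-edge picks up a zero factor in its weight $a_{\mathbf{w}}=\prod_{i}a_{v_{i},v_{i+1}}$ and therefore contributes nothing, so summing over $K_{n}$-walks and over $\mathcal{G}$-walks gives the same value. Recognizing this is the conceptual heart of the statement; the rest is bookkeeping.

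For the base case I would take $k=1$. Here $[A_{\mathcal{G}}]_{ij}=a_{ij}$, and the unique walk of length one from $i$ to $j$ in $K_{n}$ is the sequence $(i,j)$, whose weight is exactly $a_{ij}$, so the claimed identity holds directly. (If one prefers to start at $k=0$, then $A_{\mathcal{G}}^{0}=I$ and the only length-zero walk is the trivial walk $(i)$, present iff $i=j$, which matches $[I]_{ij}=\delta_{ij}$.)

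For the inductive step I would assume the claim for $k$ and expand the matrix product
\begin{equation*}
\left[A_{\mathcal{G}}^{k+1}\right]_{ij}=\sum_{l=1}^{n}\left[A_{\mathcal{G}}^{k}\right]_{il}\,a_{lj}=\sum_{l=1}^{n}\left(\sum_{\mathbf{w}\in W_{i,l}^{(k)}}a_{\mathbf{w}}\right)a_{lj}.
\end{equation*}
The key combinatorial step is to establish a bijection between the length-$(k+1)$ walks from $i$ to $j$ and the pairs consisting of a length-$k$ walk $\mathbf{w}=(i,v_{1},\dots,v_{k-1},l)$ together with the terminal node $j$: every length-$(k+1)$ walk $(i,v_{1},\dots,v_{k-1},l,j)$ is obtained uniquely by appending $j$ to such a $\mathbf{w}$, and its weight factors as $a_{\mathbf{w}}\cdot a_{lj}$. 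Substituting this identification into the double sum collapses it to $\sum_{\mathbf{w}'\in W_{i,j}^{(k+1)}}a_{\mathbf{w}'}$, which is the desired expression, closing the induction.

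The step I expect to require the most care is not analytically deep but organizational: verifying that the append-and-delete map is a genuine bijection over $K_{n}$ as $l$ ranges over all intermediate nodes (so that no length-$(k+1)$ walk is double-counted or omitted), and confirming that the zero-weight convention makes the reduction to $\mathcal{G}$-walks automatic rather than an extra hypothesis to be enforced by hand. Once these two points are settled, the inductive computation goes through without further obstacle.
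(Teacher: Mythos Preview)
Your proposal is correct and is essentially the same argument as the paper's own proof: the paper expands $[A_{\mathcal{G}}^{k}]_{v_{0},v_{k}}=\sum_{v_{1}}a_{v_{0},v_{1}}[A_{\mathcal{G}}^{k-1}]_{v_{1},v_{k}}$ and unrolls this recursion directly into the multi-index sum $\sum_{1\le v_{1},\dots,v_{k-1}\le n}\prod a_{v_{i},v_{i+1}}$, which is precisely your inductive step iterated $k$ times. Your explicit remark that the zero-weight convention justifies summing over walks in $K_{n}$ is the one point you make more carefully than the paper does.
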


\begin{proof}
The above result comes directly from an algebraic expansion of $\left[ A_{%
\mathcal{G}}^{k}\right] _{v_{0},v_{k}}$. First, notice that%
\begin{equation*}
\left[ A_{\mathcal{G}}^{k}\right] _{v_{0},v_{k}}=%
\sum_{v_{1}=1}^{n}a_{v_{0},v_{1}}\left[ A_{\mathcal{G}}^{k-1}\right]
_{v_{1},v_{k}}.
\end{equation*}%
Using the above rule in a simple recursion, we can expand the entries of the
decreasing powers of $A_{\mathcal{G}}$, to obtain%
\begin{eqnarray*}
\left[ A_{\mathcal{G}}^{k}\right] _{v_{0},v_{k}} &=&\sum_{1\leq
v_{1},...,v_{k-1}\leq n}a_{v_{0},v_{1}}a_{v_{1},v_{2}}...a_{v_{k-1},v_{k}} \\
&=&\sum_{\mathbf{w}\in W_{i,j}^{\left( k\right)
}}\prod_{i=0}^{k-1}a_{v_{i},v_{i+1}},
\end{eqnarray*}%
which is the statement of our lemma.
\end{proof}

The following result will be useful in our derivations:

\begin{lemma}
\label{lem_partial_of_trace} Let $\mathcal{G}$ be a weighted undirected
graph with no self-loops. Then, we have that%
\begin{equation}
\frac{\partial \mathbf{tr}(A_{\mathcal{G}}^{k})}{\partial a_{ij}}=2k\left[
A_{\mathcal{G}}^{k-1}\right] _{ij}.
\end{equation}
\end{lemma}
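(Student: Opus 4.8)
The plan is to compute the derivative through the matrix differential of $\mathbf{tr}(A_{\mathcal{G}}^{k})$ and then account carefully for the symmetry constraint $a_{ij}=a_{ji}$, which is exactly where the factor $2$ will appear. Write $A=A_{\mathcal{G}}$. First I would record the standard differential identity for matrix powers: applying the product rule to $A^{k}=A\cdot A\cdots A$ gives
\[
d(A^{k})=\sum_{m=0}^{k-1}A^{m}\,(dA)\,A^{k-1-m}.
\]
Taking the trace and using its cyclic invariance, every one of the $k$ summands collapses to $\mathbf{tr}(A^{k-1}\,dA)$, so that
\[
d\,\mathbf{tr}(A^{k})=k\,\mathbf{tr}(A^{k-1}\,dA).
\]
This already isolates the factor $k$ in the claim, and the remaining task is to evaluate $\mathbf{tr}(A^{k-1}\,dA)$ when the only quantity being varied is $a_{ij}$ with $i\neq j$.

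The main point, and the step most likely to trip one up, is that $a_{ij}$ and $a_{ji}$ are not independent: symmetry of the adjacency matrix forces them to be the same scalar variable, so varying $a_{ij}$ perturbs two entries of $A$ simultaneously. Concretely $\partial A/\partial a_{ij}=E_{ij}+E_{ji}$, where $E_{pq}$ denotes the matrix with a single $1$ in position $(p,q)$ and zeros elsewhere. Substituting this into $\mathbf{tr}\!\left(A^{k-1}\,\partial A/\partial a_{ij}\right)$ and using the identity $\mathbf{tr}(A^{k-1}E_{pq})=[A^{k-1}]_{qp}$ gives $[A^{k-1}]_{ji}+[A^{k-1}]_{ij}$. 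Finally, since $A$ is symmetric so is $A^{k-1}$, whence $[A^{k-1}]_{ji}=[A^{k-1}]_{ij}$ and the two contributions combine into a factor $2$, yielding $\partial\,\mathbf{tr}(A^{k})/\partial a_{ij}=2k[A^{k-1}]_{ij}$.

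An alternative, more combinatorial route would build directly on Lemma~\ref{lem_Biggs}: expand $\mathbf{tr}(A^{k})=\sum_{v_{0}}[A^{k}]_{v_{0}v_{0}}$ as a weighted sum over rooted closed walks of length $k$, differentiate each product $a_{\mathbf{w}}$ edge by edge, and observe that every occurrence of the edge $\{i,j\}$ in a closed walk (traversed either as $i\to j$ or as $j\to i$) leaves behind a walk of length $k-1$ joining the two endpoints. Summing the weights of these residual walks reproduces $[A^{k-1}]_{ij}$ by Lemma~\ref{lem_Biggs}; the factor $k$ then arises from the $k$ possible positions of the root along the cycle, and the factor $2$ from the two traversal directions. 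I expect the differential argument to be the cleaner of the two, with the symmetry bookkeeping that produces the factor $2$ being the only genuinely delicate point.
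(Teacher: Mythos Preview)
Your proposal is correct and follows essentially the same approach as the paper: both obtain the factor $k$ from the product rule combined with cyclic invariance of the trace, and the factor $2$ from the symmetry constraint that forces $a_{ij}$ and $a_{ji}$ to vary together. The only differences are cosmetic---the paper first proves $\partial\,\mathbf{tr}(SB)/\partial s_{ij}=2b_{ij}$ for symmetric $S$ via an upper/lower triangular decomposition and then applies the product rule, whereas you take the matrix differential first and then handle symmetry via $\partial A/\partial a_{ij}=E_{ij}+E_{ji}$; your presentation is arguably cleaner, but the mathematical content is identical.
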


\begin{proof}
First, notice that for any two matrices $S=\left[ s_{ij}\right] $ and $B=%
\left[ b_{ij}\right] $, we have that $\mathbf{tr}\left( SB\right)
=\sum_{i,j}s_{ij}b_{ji}$. Consider $S$ and $B$ to be symmetric matrices.
Then, we can write $S=U+L$, where $U$ and $L$ are upper and lower triangular
matrices, respectively, with $L=U^{T}$. Let $i<j$ (the same holds for $j<i$%
), hence 
\begin{eqnarray*}
\frac{\partial }{\partial s_{ij}}\mathbf{tr}(SB) &=&\frac{\partial }{%
\partial s_{ij}}\left[ \mathbf{tr}(UB)+\mathbf{tr}(LB)\right] \\
&=&\frac{\partial }{\partial s_{ij}}\left[ \sum_{i<j}u_{ij}b_{ji}+%
\sum_{i>j}l_{ij}b_{ji}\right] \\
&=&\frac{\partial }{\partial u_{ij}}\sum_{i<j}u_{ij}b_{ji}+%
\sum_{j>i}u_{ij}b_{ij} \\
&=&b_{ji}+b_{ij}=2b_{ij}.
\end{eqnarray*}%
Then, we have that

\begin{eqnarray*}
\frac{\partial \mathbf{tr}(A_{\mathcal{G}}^{k})}{\partial a_{ij}} &=&\frac{%
\partial \mathbf{tr}(S_{\mathcal{G}}A_{\mathcal{G}}^{k-1})}{\partial s_{ij}}+%
\frac{\partial \mathbf{tr}(A_{\mathcal{G}}S_{\mathcal{G}}A_{\mathcal{G}%
}^{k-2})}{\partial s_{ij}}+... \\
&&+\frac{\partial \mathbf{tr}(A_{\mathcal{G}}^{k-1}S_{\mathcal{G}})}{%
\partial s_{ij}} \\
&=&k\frac{\partial \mathbf{tr}(S_{\mathcal{G}}A_{\mathcal{G}}^{k-1})}{%
\partial s_{ij}}=2k\left[ A_{\mathcal{G}}^{k-1}\right] _{ij},
\end{eqnarray*}
which completes the proof.
\end{proof}

%%%%%%%%%%%%%%%%%%%%%%%%%%%%%%%%%%%%%%%%%%%%%%%%%%%%%%%%%%%%%%%%%%%%%%%%%

\subsection{Problem Definition}

\label{sec_problem}

Consider a group of $n$ mobile robots and define by $x_{i}(t)\in \mathbb{R}%
^{d}$ the position of robot $i$ at time $t\geq 0$. Let $x=\left[ 
\begin{matrix}
x_{1}^{T} & \dots & x_{n}^{T}%
\end{matrix}%
\right] ^{T}\in \mathbb{R}^{dn}$ denote the stacked column vector of all
robot positions, so that $x_{ir}$ is the $r$-th coordinate of the $i$-th
robot position. We assume that we can control the position of the robots by
the simple kinematic law 
\begin{equation}
\dot{x}=u=\left[ 
\begin{matrix}
u_{1}^{T} & \dots & u_{n}^{T}%
\end{matrix}%
\right] ^{T}\in \mathbb{R}^{dn},  \label{Closed-Loop System}
\end{equation}%
where $u_{i}\left( t\right) \in \mathbb{R}^{d}$ is the control input applied
to robot $i$, and $u_{ir}\left( t\right) $ is the $r$-th coordinate of $%
u_{i}\left( t\right) $.

For a given set of robot positions, $\left\{ x_{i}\right\} _{i=1}^{n}$, we
define the weighted adjacency matrix of the network of robots as $A(x)=\left[
a_{ij}(x)\right] $ with%
\begin{equation}
a_{ij}(x)\triangleq \mathbf{e}^{-c\left\Vert x_{i}-x_{j}\right\Vert _{z}},
\label{eqn_adjacency_entries}
\end{equation}%
where $c>0$ is a constant and $z\in \left\{ 1,2\right\} $ denotes the $\ell
_{1}$ or $\ell _{2}$ norm. Notice that $A(x)$ is a symmetric matrix with
position-dependent real eigenvalues, $\left\{ \lambda _{i}\left( x\right)
\right\} _{i=1}^{n}$. We define, further, the $k$-th spectral moment of $%
A\left( x\right) $ by%
\begin{equation}
m_{k}\left( x\right) \triangleq \frac{1}{n}\sum_{i=1}^{n}\lambda
_{i}^{k}\left( x\right) =\frac{1}{n}\mathbf{tr}\left[ A^{k}\left( x\right) %
\right] ,  \label{eqn_spectral_moments}
\end{equation}%
for $1\leq k\leq n$, where the last equality follows from diagonalizing $%
A^{k}\left( x\right) $. For any finite network with $n$ nodes, its
eigenvalue spectrum is uniquely defined by the sequence of $n$ moments $%
\left( m_{1},m_{2},...,m_{n}\right) $. Therefore, we can simultaneously
control the whole set of eigenvalues of a network by controlling the first $%
n $ spectral moments of $A\left( x\right) $. This gives rise to the
following problem that we aim to address:

\begin{problem}[Control of spectral moments]
\label{problem} Let $\{m_k^\star\}_{k=1}^n$ denote a desired set of spectral
moments. Design control laws $u_i$ for all robots $i=1,\dots,n$ so that the
adjacency matrix $A(x)$ of the position-dependent robot network has spectral
moments that satisfy $m_k(x)\to m_{k}^{\star }$ for all $k=1,\dots,n$.
\end{problem}

Controlling the eigenvalue spectrum of the adjacency matrix of a network is
particularly important, since it is related to the behavior of interesting
network dynamical properties \cite{P08}. In Section~\ref%
{sec_control_convergence} we propose a gradient descent algorithm to address
Problem~\ref{problem} and discuss its convergence properties. In Section~\ref%
{sec_simulations} we illustrate our approach in numerical simulations.

%%%%%%%%%%%%%%%%%%%%%%%%%%%%%%%%%%%%%%%%%%%%%%%%%%%%%%%%%%%%%%%%%%%%%%%%%

\section{Control of Spectral Moments}

\label{sec_control_convergence}

%%%%%%%%%%%%%%%%%%%%%%%%%%%%%%%%%%%%%%%%%%%%%%%%%%%%%%%%%%%%%%%%%%%%%%%%%

\subsection{Controller Design}

\label{sec_control}

Assume a given sequence of desired spectral moments $\{m_{k}^{\star
}\}_{k=1}^{n}$ and define the cost function%
\begin{equation}  \label{eqn_cost_function}
f(x)\triangleq \sum_{k=1}^{n}\frac{1}{4k}\left( m_{k}\left( x\right)
-m_{k}^{\star }\right) ^{2},
\end{equation}
where $m_{k}\left( x\right) $ is defined in (\ref{eqn_spectral_moments}).
Define, further, the gradient descent control law 
\begin{equation}  \label{eqn_gradient_controller}
u\triangleq -\nabla _{x}f(x).
\end{equation}
Then, we can show the following result:

\begin{lemma}
\label{lem_control_law} Let the adjacency matrix $A=A\left( x\right) $ be
defined as in (\ref{eqn_adjacency_entries}) and assume that $z=1$ ($\ell_1$
norm). Then, an explicit expression for the entries of $u$ is given by 
\begin{equation*}
u_{ir}=\frac{c}{n}\sum_{k=1}^{n}\left( \frac{1}{n}\mathbf{tr}%
A^{k}-m_{k}^{\star }\right) \left[ \left( A\circ S_{r}\right) A^{k-1}\right]
_{ii},
\end{equation*}%
where $S_{r}\left( x\right) =\left[ s_{ij}\right] $ with $s_{ij}\triangleq $%
sgn$\left( x_{ir}-x_{jr}\right) $.
\end{lemma}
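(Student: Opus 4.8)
The plan is to evaluate $u_{ir}=-\partial f/\partial x_{ir}$ directly by differentiating the cost function~(\ref{eqn_cost_function}) via a two-stage chain rule: first through the moments $m_{k}(x)=\frac{1}{n}\mathbf{tr}(A^{k})$, and then through the adjacency entries $a_{pq}(x)$. Differentiating~(\ref{eqn_cost_function}) gives
\begin{equation*}
\frac{\partial f}{\partial x_{ir}}=\sum_{k=1}^{n}\frac{1}{2k}\left(m_{k}(x)-m_{k}^{\star}\right)\frac{\partial m_{k}}{\partial x_{ir}},
\end{equation*}
so that the remaining task is to produce a closed form for $\partial m_{k}/\partial x_{ir}=\frac{1}{n}\,\partial\,\mathbf{tr}(A^{k})/\partial x_{ir}$. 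The coefficient $1/(4k)$ in the cost is chosen precisely so that the factor $2k$ emerging below cancels cleanly.

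For the inner derivative I would treat $\mathbf{tr}(A^{k})$ as a function of the entries and apply the chain rule, $\partial\,\mathbf{tr}(A^{k})/\partial x_{ir}=\sum_{q}\big(\partial\,\mathbf{tr}(A^{k})/\partial a_{iq}\big)\big(\partial a_{iq}/\partial x_{ir}\big)$, noting that $x_{ir}$ enters only the entries of row/column $i$. Lemma~\ref{lem_partial_of_trace} supplies the first factor, $\partial\,\mathbf{tr}(A^{k})/\partial a_{iq}=2k[A^{k-1}]_{iq}$; crucially, this lemma already accounts for the symmetric coupling $a_{iq}=a_{qi}$, which is the one bookkeeping point I would be careful about. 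For the second factor, the assumption $z=1$ makes the norm separable, $\left\Vert x_{i}-x_{q}\right\Vert_{1}=\sum_{s}\left\vert x_{is}-x_{qs}\right\vert$, so only the $s=r$ term depends on $x_{ir}$ and
\begin{equation*}
\frac{\partial a_{iq}}{\partial x_{ir}}=-c\,a_{iq}\,\mathrm{sgn}(x_{ir}-x_{qr})=-c\,a_{iq}\,s_{iq}.
\end{equation*}

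Combining the two factors yields $\partial\,\mathbf{tr}(A^{k})/\partial x_{ir}=-2kc\sum_{q}[A^{k-1}]_{iq}\,a_{iq}\,s_{iq}$. The last step is recognition: since $a_{iq}s_{iq}=[A\circ S_{r}]_{iq}$ and $A^{k-1}$ is symmetric, this sum is exactly the diagonal entry $[(A\circ S_{r})A^{k-1}]_{ii}$, where the $q=i$ term drops out because $a_{ii}=0$. Substituting back, the $2k$ cancels against $1/(2k)$ to leave $-c/n$, and writing $m_{k}=\frac{1}{n}\mathbf{tr}(A^{k})$ recovers the claimed expression for $u_{ir}$.

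The only genuinely delicate points are the symmetry bookkeeping --- handled once and for all by invoking Lemma~\ref{lem_partial_of_trace} rather than differentiating entrywise by hand --- and the nonsmoothness of the $\ell_{1}$ norm at coincident coordinates $x_{ir}=x_{qr}$, where $\mathrm{sgn}$ must be read as a subgradient and the formula holds wherever the gradient is defined. I would remark that it is exactly the separability of the $\ell_{1}$ norm, producing the clean sign function, that makes $z=1$ convenient here; the $\ell_{2}$ case would simply replace $s_{iq}$ by a normalized coordinate difference.
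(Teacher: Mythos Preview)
Your proposal is correct and follows essentially the same approach as the paper: both compute $-\partial f/\partial x_{ir}$ via the chain rule through the adjacency entries, invoke Lemma~\ref{lem_partial_of_trace} to obtain the factor $2k[A^{k-1}]_{ij}$, differentiate $a_{ij}$ using the separability of the $\ell_{1}$ norm to produce $-c\,a_{ij}\,\mathrm{sgn}(x_{ir}-x_{jr})$, and then recognize the resulting sum as the diagonal entry of $(A\circ S_{r})A^{k-1}$. The only cosmetic difference is that the paper first restricts the chain-rule sum to the independent upper-triangular entries $a_{pq}$ with $p<q$ and then collapses it to a single sum over $j$, whereas you go directly to the row-$i$ sum and rely on Lemma~\ref{lem_partial_of_trace} to absorb the symmetry bookkeeping; both are valid.
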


\begin{proof}
The adjacency matrix of an undirected graph with no self-loops has $\binom{n%
}{2}$ independent entries (for example, the upper triangular entries). Each
one of these entries are a function of the vector of positions $x$. Hence,
applying the chain rule, we have the following expansion for the partial
derivative of the cost function with respect to the entries $x_{ir}$:%
\begin{equation}  \label{eqn_chain_rule}
\frac{\partial f(x)}{\partial x_{ir}}=\sum_{1\leq p<q\leq n}\frac{\partial f%
}{\partial a_{pq}}\frac{\partial a_{pq}}{\partial x_{ir}}.
\end{equation}%
Furthermore, a particular entry $a_{pq}$ depends solely on the position $%
x_{p}$ and $x_{q}$; therefore, we have that $\frac{\partial a_{pq}}{\partial
x_{ir}}=0$ if both $p$ and $q$ are different than $i$. Thus, for a fixed $i$%
, only the following summands in (\ref{eqn_chain_rule}) survive%
\begin{eqnarray}  \label{eqn_surviving_terms}
\frac{\partial f(x)}{\partial x_{ir}} &=&\sum_{p=1}^{i}\frac{\partial f}{%
\partial a_{pi}}\frac{\partial a_{pi}}{\partial x_{ir}}+\sum_{q=i}^{n}\frac{%
\partial f}{\partial a_{iq}}\frac{\partial a_{iq}}{\partial x_{ir}}  \notag
\\
&=&\sum_{j=1}^{n}\frac{\partial f}{\partial a_{ij}}\frac{\partial a_{ij}}{%
\partial x_{ir}},
\end{eqnarray}%
where we have used that $a_{ij}=a_{ji}$ and $a_{ii}=0$.

We now analyze each one of the partial derivatives in (\ref%
{eqn_surviving_terms}). First, from (\ref{eqn_cost_function}) and (\ref%
{eqn_spectral_moments}), we have that%
\begin{eqnarray}
\frac{\partial f}{\partial a_{ij}} &=&\sum_{k=1}^{n}\frac{1}{2k}\left(
m_{k}\left( x\right) -m_{k}^{\star }\right) \frac{\partial }{\partial a_{ij}}%
\left( \frac{1}{n}\mathbf{tr}\left( A^{k}\left( x\right) \right) \right) 
\notag  \label{eqn_partial_f} \\
&=&\frac{1}{n}\sum_{k=1}^{n}\left( m_{k}\left( x\right) -m_{k}^{\star
}\right) \left[ A^{k-1}\left( x\right) \right] _{ij},
\end{eqnarray}%
where we have used\ Lemma \ref{lem_partial_of_trace} in the last equality.
Second, from (\ref{eqn_adjacency_entries}) we have that%
\begin{eqnarray}
\frac{\partial a_{ij}\left( x\right) }{\partial x_{ir}} &=&-c\mathbf{e}%
^{-c\left\Vert x_{i}-x_{j}\right\Vert _{1}}\frac{\partial }{\partial x_{ir}}%
\left\vert x_{ir}-x_{jr}\right\vert  \notag  \label{eqn_partial_a} \\
&=&-ca_{ij}\text{sgn}\left( x_{ir}-x_{jr}\right) ,
\end{eqnarray}%
for $x_{ir}\neq x_{jr}$ (the fact that $\left\vert x_{ir}-x_{jr}\right\vert $
is not differentiable at $x_{ir}=x_{jr}$ does not affect our analysis). Let
us define the antisymmetric matrix $S_{r}\left( x\right) =\left[
s_{ij}^{\left( r\right) }\left( x\right) \right] $ with $s_{ij}^{\left(
r\right) }\left( x\right) \triangleq $sgn$\left( x_{ir}-x_{jr}\right) $.
Hence, we can write (\ref{eqn_partial_a}) using a Hadamard product as follows%
\begin{equation}
\frac{\partial a_{ij}\left( x\right) }{\partial x_{ir}}=-c\left[ A\left(
x\right) \circ S_{r}\left( x\right) \right] _{ij}.
\label{eqn_partial_a_nice}
\end{equation}%
Substituting (\ref{eqn_partial_f}) and (\ref{eqn_partial_a_nice}) in (\ref%
{eqn_surviving_terms}), we have%
\begin{eqnarray*}
\frac{\partial f(x)}{\partial x_{ir}} &=&-\frac{c}{n}\sum_{k=1}^{n}\left(
m_{k}\left( x\right) -m_{k}^{\star }\right) \sum_{j=1}^{n}\left[ A^{k-1}%
\right] _{ji}\left[ A\circ S_{r}\right] _{ij} \\
&=&-\frac{c}{n}\sum_{k=1}^{n}\left( m_{k}\left( x\right) -m_{k}^{\star
}\right) \left[ \left( A\circ S_{r}\right) A^{k-1}\right] _{ii},
\end{eqnarray*}%
where both matrices $A$ and $S_{r}$ depend on $x$. Then, from (\ref%
{eqn_spectral_moments}), we obtain the statement of our lemma.
\end{proof}

An efficient relaxation of the spectral control Problem~\ref{problem}
results from controlling a truncated sequence of spectral moments $\left(
m_{1},...,m_{s}\right) $, for $s<n$. In this case, we can define a cost
function%
\begin{equation}  \label{eqn_alternative_cost_function}
f_{s}(x)=\sum_{k=1}^{s}\frac{1}{4k}\left( m_{k}\left( x\right) -m_{k}^{\star
}\right) ^{2},
\end{equation}%
and an associated control law $u=-\nabla _{x}f_{s}(x)$. An explicit
expression for $u$ can be obtained by following the steps in the proof of
Lemma \ref{lem_control_law}, which result in%
\begin{equation}  \label{eqn_alternative_control_law}
u_{ir}^{\left( s\right) }=\frac{c}{n}\sum_{k=1}^{s}\left( \frac{1}{n}\mathbf{%
tr}A^{k}-m_{k}^{\star }\right) \left[ \left( A\circ S_{r}\right) A^{k-1}%
\right] _{ii}.
\end{equation}

Although controlling a truncated sequence of moments is not mathematically
equivalent to controlling the whole eigenvalue spectrum of the adjacency, we
observe a very good overall matching between the eigenvalues obtained from
the relaxed problem and the desired eigenvalue spectrum, especially for the
eigenvalues of largest magnitude, which are usually the most relevant in
dynamical problems (Section~\ref{sec_simulations}).

%%%%%%%%%%%%%%%%%%%%%%%%%%%%%%%%%%%%%%%%%%%%%%%%%%%%%%%%%%%%%%%%%%%%%%%%%

\subsection{Convergence Analysis}

\label{sec_convergence}

To simplify convergence analysis of the closed-loop system (\ref{Closed-Loop System}%
), we restrict its dynamics  to the open set $%
\mathcal{F}=\{x\;|\;m_{k}\left( x\right) >m_{k}^{\star },\;\;\forall \;1\leq
k\leq s\}$.\footnote{It is shown in Theorem~\ref{thm_convergence} that this construction does not restrict convergence of the state variables to the desired equilibria.} We can ensure that $x\in\mathcal{F}$ for all time by adding the barrier
potential%
\begin{equation}
b_{s}(x)\triangleq \sum_{k=1}^{s}\frac{\epsilon _{k}}{4k}\frac{1}{\left(
m_{k}\left( x\right) -m_{k}^{\star }\right) ^{2}},
\label{eqn_barrier_function}
\end{equation}%
to (\ref{eqn_cost_function}), for sufficiently small constants $\epsilon _{1},\dots ,\epsilon _{s}>0$
(assuming that the initial state of the system is already in $\mathcal{F}$%
). The convergence properties of the resulting closed loop system $\dot{x}%
=-\nabla _{x}(f_{s}(x)+b_{s}(x))$ are discussed in the following result.

\begin{theorem}
\label{thm_convergence} Let $\{m_{k}^\star\}_{k=1}^s$ denote a desired set
of adjacency spectral moments and assume that $x(0)\in\mathcal{F}$. Then,
for sufficiently small $\epsilon_1\dots,\epsilon_s>0$, the closed loop
system 
\begin{equation}  \label{eqn_closed_loop}
\dot{x}=-\nabla _{x}\left(f_s(x)+b_s(x)\right)
\end{equation}%
ensures that the moments $\{m_k(\tilde{x})\}_{k=1}^s$, where $\tilde{x}%
=\lim_{t\rightarrow \infty }x(t)$, approximate arbitrarily well the desired
set $\{m_k^\star\}_{k=1}^s$.
\end{theorem}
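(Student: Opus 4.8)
The plan is to treat $V_s := f_s + b_s$ as a Lyapunov function for the gradient flow (\ref{eqn_closed_loop}), establish convergence of $x(t)$ to the global minimizer of $V_s$ over $\mathcal{F}$, and then show that this minimizer realizes the target moments up to an error controlled by $\max_k\epsilon_k$. First I would note that along (\ref{eqn_closed_loop}),
\begin{equation*}
\frac{d}{dt}V_s(x(t)) = \nabla_x V_s \cdot \dot{x} = -\norm{\nabla_x V_s}^2 \leq 0,
\end{equation*}
so $V_s$ is nonincreasing. Since $x(0)\in\mathcal{F}$ gives $V_s(x(0))<\infty$, while $b_s(x)\to+\infty$ as $x$ approaches the boundary $\{m_k(x)=m_k^\star\}$ of $\mathcal{F}$, the monotone decrease of $V_s$ forbids the trajectory from reaching this boundary; hence $x(t)\in\mathcal{F}$ for all $t\geq 0$ and every moment $m_k(x(t))$ stays bounded away from its target. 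This is precisely the role of the barrier (\ref{eqn_barrier_function}) and justifies restricting the analysis to $\mathcal{F}$.

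Next I would establish convergence. Because $V_s$ depends on $x$ only through the inter-robot distances $\norm{x_i-x_j}_1$, it is invariant under rigid translations, the centroid $\frac{1}{n}\sum_i x_i$ is conserved, and the flow descends to the quotient by translations. On this quotient I would argue that the sublevel set $\{V_s\leq V_s(x(0))\}\cap\mathcal{F}$ is compact: the barrier excludes its boundary, while configurations in which the robots separate without bound force $A(x)\to 0$ and hence the even moments $m_k(x)\to 0$, eventually leaving $\mathcal{F}$. LaSalle's invariance principle then confines the $\omega$-limit set to the critical points $\{\nabla_x V_s=0\}$. Invoking the convexity hypothesis on $V_s$ over $\mathcal{F}$, these critical points collapse to the unique global minimizer $\tilde{x}$, so $x(t)\to\tilde{x}$ and the limit $\tilde{x}$ is well defined.

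It remains to bound $m_k(\tilde{x})-m_k^\star$, and here I would exploit global minimality rather than solve the stationarity condition directly. For any comparison configuration $x\in\mathcal{F}$ one has $f_s(\tilde{x})\leq V_s(\tilde{x})\leq V_s(x)=f_s(x)+b_s(x)$. Assuming the target moments are feasible, I would choose $x_\delta\in\mathcal{F}$ with $m_k(x_\delta)=m_k^\star+\delta$ for every $k$; writing $C:=\sum_{k=1}^s\frac{1}{4k}$ and $\bar{\epsilon}:=\max_k\epsilon_k$, this yields $f_s(x_\delta)=C\delta^2$ and $b_s(x_\delta)\leq C\bar{\epsilon}/\delta^2$. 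Optimizing the resulting bound $f_s(\tilde{x})\leq C\delta^2+C\bar{\epsilon}/\delta^2$ over $\delta$ at $\delta=\bar{\epsilon}^{1/4}$ gives $f_s(\tilde{x})\leq 2C\sqrt{\bar{\epsilon}}$, whence
\begin{equation*}
\abs{m_k(\tilde{x})-m_k^\star}\leq \sqrt{8kC}\,\bar{\epsilon}^{1/4}
\end{equation*}
for each $k$, which tends to $0$ as $\bar{\epsilon}\to 0$ and is the claimed arbitrarily good approximation. The same $\bar{\epsilon}^{1/4}$ scaling is predicted by the stationarity balance $(m_k-m_k^\star)^4\approx\epsilon_k$ obtained by setting $\nabla_x f_s=-\nabla_x b_s$.

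The main obstacles are the two structural hypotheses on which the plan leans. The first is the convexity of $V_s$ over $\mathcal{F}$: the gradient flow alone only certifies convergence to a critical point, and convexity is what upgrades this to the \emph{global} minimizer to which the comparison bound applies; verifying it is where the $\ell_1$ choice in (\ref{eqn_adjacency_entries}) is meant to help, since composition with the Manhattan distance preserves convexity. The second is feasibility, i.e.\ the existence of configurations $x_\delta\in\mathcal{F}$ attaining moments arbitrarily close to $\{m_k^\star\}$ from above; without such a reachability guarantee the comparison point need not exist and only a weaker approximation can be claimed. The precompactness step in relative coordinates is the remaining, more technical but essentially routine, ingredient once these two points are granted.
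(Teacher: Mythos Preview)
Your Lyapunov/barrier framework and the invariance argument for $\mathcal{F}$ match the paper's, and your explicit $\bar{\epsilon}^{1/4}$ comparison bound is sharper than anything the paper states. But the convexity step---which you correctly flag as the crux---does not go through as you describe. The composition $e^{-c\Vert x_i-x_j\Vert_1}$ is \emph{not} convex in $x$: the Manhattan distance is only piecewise affine, and on crossing a hyperplane $\{x_{ir}=x_{jr}\}$ the scalar function $e^{-c|x_{ir}-x_{jr}|}$ has a downward kink (it peaks there), so neither $f_s$ nor $V_s$ is convex on $\mathcal{F}$ as a whole. Your remark that ``composition with the Manhattan distance preserves convexity'' is the misstep; composition with an \emph{affine} map preserves convexity, and $\Vert\cdot\Vert_1$ is not affine.

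The paper's key idea, which your plan is missing, is to restrict to the polytope
\[
\mathcal{P}_0=\{x:\ x_{ir}(0)\leq x_{jr}(0)\ \Rightarrow\ x_{ir}\leq x_{jr},\ \forall i,j,r\}
\]
determined by the initial coordinate ordering. On $\mathcal{P}_0$ every $|x_{ir}-x_{jr}|$ is genuinely affine, so $\mathbf{tr}\,A^k(x)$ is a sum of exponentials of affine functions and hence convex; together with the barrier keeping $m_k(x)-m_k^\star>0$, this makes $f_s$ convex on $\mathcal{P}_0\cap\mathcal{F}$. Having localized to $\mathcal{P}_0$, the paper then needs a separate feasibility step---carried out by permuting robot labels---to show that $\partial\mathcal{F}$ intersects $\mathcal{P}_0$, i.e., that the target moments can be realized without changing the coordinate ordering. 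Both ingredients (the polytope restriction for convexity and the permutation argument for feasibility within $\mathcal{P}_0$) are absent from your proposal.

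A secondary gap: your comparison point $x_\delta$ with $m_k(x_\delta)=m_k^\star+\delta$ \emph{for all $k$ with the same $\delta$} presumes the image of the moment map contains the ray $\{m^\star+\delta\mathbf{1}:\delta>0\}$, which is not at all obvious given the algebraic dependencies among the $m_k$. The comparison idea survives if you merely take a sequence in $\mathcal{F}$ approaching a realizer of $m^\star$, but then the tidy $\bar{\epsilon}^{1/4}$ rate would need to be re-derived.
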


\begin{proof}
The time derivative of $f_{s}(x)+b_{s}(x)$ is given by 
\begin{eqnarray}
\frac{d}{dt}(f_{s}(x)+b_{s}(x)) &=&\nabla _{x}(f_{s}(x)+b_{s}(x))\dot{x} 
\notag  \label{eqn_time_derivative} \\
&=&-\Vert \nabla _{x}(f_{s}(x)+b_{s}(x))\Vert _{2}^{2}\leq 0
\end{eqnarray}%
and so the closed loop system is stable and $x$ will converge to a minimum
of $f_{s}(x)+b_{s}(x)$. Since $b_{s}(x)\rightarrow \infty $ whenever $%
x\rightarrow \partial \mathcal{F}$, where $\partial \mathcal{F}%
=\{x\;|\;m_{k}\left( x\right) =m_{k}^{\star },\;\;\forall \;1\leq k\leq
s\}=\{x\;|\;f_{s}(x)=0\}$ denotes the boundary of the set $\mathcal{F}$,
equation \eqref{eqn_time_derivative} also implies that the set $\mathcal{F}$
is an invariant of motion for the system under consideration. Let 
\begin{equation*}
\mathcal{P}_{0}=\{x\;|\;x_{ir}(0)\leq x_{jr}(0)\;\Rightarrow \;x_{ir}\leq
x_{jr},\;\forall \;i,j,r\}
\end{equation*}%
denote the polytope defined by the relative positions of the robots with
respect to their initial configuration $x(0)$, so that $x(0)\in \mathcal{P}%
_{0}\cap \mathcal{F}$ (see Fig.~\ref{fig_polytope}\footnote{%
The polytope $\mathcal{P}_{0}$ essentially defines an ordering of the state
variables.}). In what follows we show that as $\epsilon _{1},\dots ,\epsilon
_{s}\rightarrow 0$, the state variable $x$ asymptotically reaches a value in
the set $\mathcal{P}_{0}\cap \partial \mathcal{F}$, where $f_{s}(x)=0$.

To see this, observe first that 
\begin{equation*}
\lim_{\epsilon_1,\dots,\epsilon_s\to 0}(f_s(x)+b_s(x))=f_s(x)
\end{equation*}
for all $x\in\mathcal{P}_0\cap\mathcal{F}$, i.e., for small enough $%
\epsilon_1,\dots,\epsilon_s>0$, the potential $f_s(x)+b_s(x)$ can be
approximated by $f_s(x)$ in $\mathcal{P}_0\cap\mathcal{F}$. Moreover, note
that $f_s(x)$ is convex in the set $\mathcal{P}_0\cap\mathcal{F}$. Convexity
of $f_s(x)$ follows from the fact that $-c|x_{ir}-x_{jr}|$ is affine in $%
\mathcal{P}_0$ for any $c>0$ and, therefore, $-\sum c|x_{ir}-x_{jr}|$ is
also affine for any number of terms in the summation. This implies that $%
e^{-\sum c|x_{ir}-x_{jr}|}$ is convex in $\mathcal{P}_0$ as a composition of
a convex and an affine function and, therefore, 
\begin{eqnarray*}
\frac{1}{n}\mathbf{tr}A^k(x) -m_k^\star &=& \frac{1}{n}\sum \prod
e^{-c|x_{ir}-x_{jr}|} -m_k^\star \\
&=& \frac{1}{n}\sum e^{-\sum c|x_{ir}-x_{jr}|} -m_k^\star
\end{eqnarray*}
is also convex as a sum of convex functions. Clearly, $\frac{1}{n}\mathbf{tr}%
A^k(x)-m_k^\star$ is nonnegative for any $x\in\mathcal{P}_0\cap\mathcal{F}$.
Since any power greater than one of a nonnegative and convex function is
also convex, every one of the terms $\left(\frac{1}{n}\mathbf{tr}%
A^k(x)-m_k^\star\right)^2$, for $k=1,\dots,s$, is convex, which implies that 
$f_s(x)$ is also convex in $\mathcal{P}_0\cap\mathcal{F}$. Taking the limit
of \eqref{eqn_time_derivative} we have that 
\begin{equation*}
\lim_{\epsilon_1,\dots,\epsilon_s}\frac{d}{dt}(f_s(x)+b_s(x)) = \dot{f}_s(x)
\leq 0
\end{equation*}
in $\mathcal{P}_0\cap\mathcal{F}$. Therefore, convexity of $f_s(x)$ along
with the condition $\dot{f}_s(x)\leq 0$ implies that $x$ will converge to a
global minimum of $f_s(x)$ in $\mathcal{P}_0\cap\mathcal{F}$ (recall that $%
\mathcal{F}$ is an invariant of motion for the system under consideration).
Since 
\begin{equation*}
\inf_{x\in\mathcal{P}_0\cap(\mathcal{F}\cup\partial\mathcal{F}%
)}\{f_s(x)\}=\left.f_s(x)\right|_{x\in\mathcal{P}_0\cap\partial\mathcal{F}%
}=0,
\end{equation*}
we conclude that the system will converge to a network with spectral moments 
$\{m_k(x)\}_{k=1}^s$ that are almost equal to the desired $%
\{m_k^\star\}_{k=1}^s$. The quality of the approximation depends on how
small the constants $\epsilon_1,\dots,\epsilon_s>0$ are.

\begin{figure}[t]
\centering
\includegraphics[width=.9\linewidth]{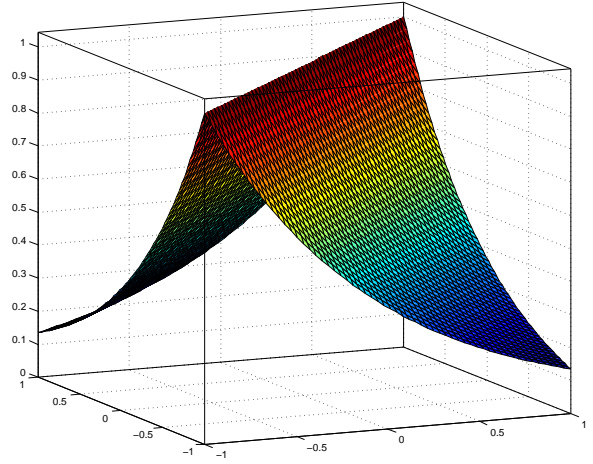}\newline
\caption{Plot of the function $e^{-|x_{ir}-x_{jr}|}$. Note the boundary $%
x_{ir}=x_{jr}$ of two adjacent convex polytopes $\mathcal{P}$.}
\label{fig_polytope}
\end{figure}

What remains is to show that $\partial\mathcal{F}\subset\mathcal{P}_0$. For
this, assume that $\partial\mathcal{F}\not \subset\mathcal{P}_0$. Then, if
the desired set of moments $\{m_k^\star\}_{k=1}^s$ is realizable, there
exists another polytope $\mathcal{P}^\star\neq\mathcal{P}_0$ such that $%
\partial\mathcal{F}\subset\mathcal{P}^\star$. Equivalently, there exists a
configuration $x^\star\in\mathcal{P}^\star$ such that $A(x^\star)$ has
eigenvalues $\{\lambda_i(x^\star)\}_{i=1}^n$ with $\sum_{i=1}^n
\lambda_i^k(x^\star)=m_k^\star$ for all $k=1,\dots,s$. The result follows
from the observation that $\mathcal{P}_0$ can be obtained from $\mathcal{P}%
^\star$ by changing the relative positions of the robots in $\mathbb{R}^d$.
Mathematically, this means that there exists a permutation matrix $\Pi\in%
\mathbb{R}^{nd\times nd}$, i.e., an orthogonal matrix with 0 or 1 entries,
such that that if $\mathcal{P}^\star=\{x \; | \; Ex\geq 0\}$ with $E\in%
\mathbb{R}^{n^2d\times nd}$, then $\mathcal{P}_0=\{x \; | \; E(\Pi x) \geq
0\}$. Therefore, if $x^\star\in\mathcal{P}^\star$, then $\Pi^Tx^\star\in%
\mathcal{P}_0$. Moreover, 
\begin{equation*}
A(\Pi^Tx^\star)\otimes I_d = \Pi (A(x^\star)\otimes I_d)\Pi^T,
\end{equation*}
i.e., a permutation of the robots' coordinates results in a permutation of
the entries of $(A(x)\otimes I_d)$, where $\otimes$ indicates the Kronecker
product between matrices. Since $\Pi$ is orthogonal, $A(\Pi^Tx^\star)\otimes
I_d$ and $A(x^\star)\otimes I_d$ have the same eigenvalues. Therefore, there
exists a configuration $\tilde{x}=\Pi^Tx^\star\in\mathcal{P}_0$ such that $A(%
\tilde{x})$ has eigenvalues $\{\lambda_i(\tilde{x})\}_{i=1}^n$ with $%
\sum_{i=1}^n \lambda_i^k(\tilde{x})=m_k^\star$ for all $k=1,\dots,s$.%
\footnote{%
Equivalently, this means that the two graphs are isomorphic.} This implies
that $\partial\mathcal{F}\subset\mathcal{P}_0$.

In the above discussion, we have used the fact that if $\{y_{i}\}_{i=1}^{n}$
are the eigenvalues of $Y\in \mathbb{R}^{n\times n}$ and $%
\{z_{i}\}_{i=1}^{m} $ are the eigenvalues of $Z\in \mathbb{R}^{m\times m}$,
then the eigenvalues of $Y\otimes Z$ are $\{y_{1}z_{1},\dots
,y_{1}z_{m},\dots ,y_{n}z_{1},\dots ,y_{n}z_{m}\}$. This implies that the
eigenvalues of $A(x)\otimes I_{d}$ are essentially the eigenvalues of $A(x)$%
, each one with multiplicity $d$.
\end{proof}

\begin{remark}[Barrier functions]
\label{rem_barrier_functions} The barrier functions $b_s(x)$ are necessary
in the proof of Theorem~\ref{thm_convergence}. If not there, the quantities $%
m_k(x) -m_k^\star$ are not guaranteed to be positive and, therefore, the
potential $f_s(x)$ is not necessarily convex in $\mathcal{P}_0$. This causes
technical difficulties in ensuring a global minimum of $f_s(x)$. Also, the
analysis in Theorem~\ref{thm_convergence} assumes that $\epsilon_1,\dots,%
\epsilon_s\to 0$. This essentially restricts the influence of the barrier
potential $b_s(x)$ to a small neighborhood of the set $\partial\mathcal{F}$,
so that the potential $f_s(x)$ remains unaffected outside this neighborhood.
In practice, the closer the constants $\epsilon_1,\dots,\epsilon_s$ are to
zero, the better the approximation of the desired spectral moments. This
approximation can be made arbitrarily good.
\end{remark}

\begin{remark}[Distance metric]
\label{rem_distance_metric} In the preceding analysis, we have employed the $%
\ell_1$ norm ($z=1$) as a distance metric to define the entries of the
adjacency matrix (see (\ref{eqn_adjacency_entries})). This choice is mainly
due to technical reasons, since it ensures that $\mathbf{tr}A^k(x)$ and,
therefore, $f_s(x)$, are convex functions. From a practical point of view,
the $\ell_1$ metric has potential applications in indoor navigation where
the presence of obstacles forces the signals to propagate in grid-like
environments. Nevertheless, our numerical simulations indicate that the
control law herein proposed also performs well for $z=2$ ($\ell_2$ norm). We
leave a rigorous proof of this case for future work.
\end{remark}

%%%%%%%%%%%%%%%%%%%%%%%%%%%%%%%%%%%%%%%%%%%%%%%%%%%%%%%%%%%%%%%%%%%%%%%%%

\section{Numerical Simulations}

\label{sec_simulations}

\begin{figure}[t]
\centering
\includegraphics[width=.80\linewidth]{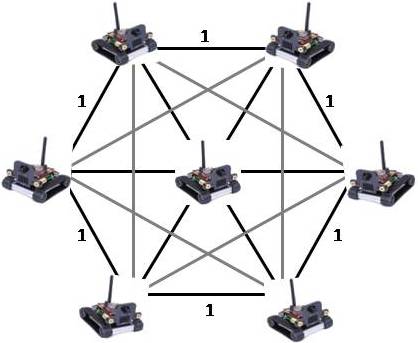}
\caption{Hexagonal formation of mobile robots.}
\label{fig_hex_net}
\end{figure}

In this section we present examples of spectral design of mobile robot
networks and discuss performance of our proposed approach.

\begin{example}[Hexagonal Formation]
\label{example_hexagonal} Consider the problem of controlling the structure
of a network of mobile robots to match the eigenvalue spectrum of the
hexagonal network on 7 nodes shown in Fig. \ref{fig_hex_net}. The target
eigenvalues of the weighted adjacency matrix of this formation are 
\begin{equation*}
\left\{ \lambda _{i}^{\star }\right\} _{i=1}^{7}=\left\{
-0.51, \;-0.47, \;-0.40, \;-0.40, \;0.05, \;0.05, \; 1.70\right\} ,
\end{equation*}%
and the corresponding spectral moments are%
\begin{equation*}
\left\{ m_{k}^{\star }\right\} _{k=1}^{7}=\left\{
0, \;0.53, \;0.64, \;1.22, \;2.02, \;3.47, \;5.90\right\} .
\end{equation*}%
The initial configuration of the mobile robot network is that of a random
geometric graph on $7$ nodes uniformly distributed in the square $\left[ 0,1%
\right] \times \left[ 0,1\right] $. We applied the proposed control law %
\eqref{eqn_closed_loop} and studied the evolution of the spectral moments of
the network's adjacency matrix. Fig.~\ref{fig_hex_conv} shows the evolution
of the second, third and fourth moment.\footnote{%
A similar behavior is observed for the higher order moments. Note that the
first spectral moment is always equal to zero for simple graphs.} As
expected, they all converge to the desired values. The asymptotic values of
all spectral moments of the network are%
\begin{equation*}
\left\{ m_{k}\right\} _{k=1}^{7}=\left\{
0, \;0.53, \;0.65, \;1.23, \;2.04, \;3.50, \;5.95\right\} ,
\end{equation*}%
which are very close to the desired sequence of moments $\left\{
m_{k}^{\star }\right\} _{k=1}^{7}$, and slightly on the larger side, as predicted by
Theorem \ref{thm_convergence}. The eigenvalues of the weighted adjacency
matrix of the final configuration are 
\begin{equation*}
\left\{ \lambda _{i}\right\} _{i=1}^{7}=\left\{
-0.52, \;-0.48, \;-0.42, \;-0.40, \;0.02, \;0.10, \;1.70\right\} ,
\end{equation*}%
which are also very close to the desired values. Notice that our approach
fits better those eigenvalues further away from the origin, since they are
more heavily weighted in the expression of spectral moments, than those
close to zero. An alternative approach to overcome this limitation would be
to modify our cost function (\ref{eqn_cost_function}) to assign more weight
to eigenvalues of small magnitude.

As discussed in Section~\ref{sec_control}, we also consider a relaxation of
Problem~\ref{problem}, that involves a truncated sequence of the first four
moments of the network. In this case, the cost function is given by (\ref%
{eqn_alternative_cost_function}) and the closed loop system is defined in %
\eqref{eqn_closed_loop}, for $s=4$. The asymptotic values of the first four
spectral moments of the network are 
\begin{equation*}
\left\{ m_{k}\right\} _{k=1}^{4}=\left\{ 0, \;0.55, \;0.65, \;1.26\right\}
\end{equation*}%
and eigenvalues of the final configuration are 
\begin{equation*}
\left\{ \lambda _{i}\right\} _{i=1}^{7}=\left\{
-0.55, \;-0.50, \;-0.42, \;-0.31, \;-0.09, \;0.18, \;1.71\right\} ,
\end{equation*}%
which are remarkably close to the set of desired eigenvalues, especially
those far from zero. The main advantages of using a relaxation of Problem~%
\ref{problem} are (\emph{i}) that it reduces the computational cost of the
controller, since it only requires the terms $\mathbf{tr}A^{k}$ and $A^{k-1}$%
, and (\emph{ii}) that it improves the stability of the numerical behavior
of the gradient descent algorithm.
\end{example}

\begin{figure}[tbp]
\centering
\includegraphics[width=.95\linewidth]{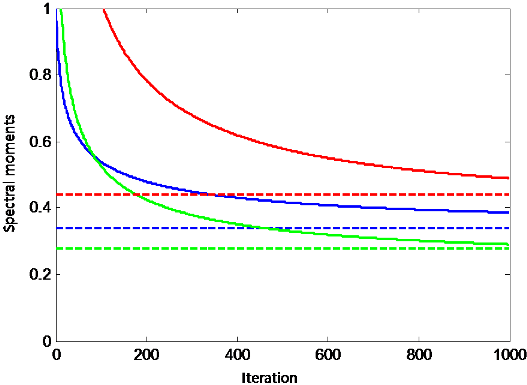}
\caption{Evolution of second (blue), third (green) and fourth (red) spectral
moments towards the moments of the hexagonal formation.}
\label{fig_hex_conv}
\end{figure}

\begin{example}[Random Geometric Network]
\label{example_delta} In this example we consider the problem of matching
the eigenvalue spectrum of a particular realization of a random geometric
graph on $10$ nodes that are uniformly distributed in $\left[ 0,1\right]
\times \left[ 0,1\right] $. This target realization is illustrated in Fig. %
\ref{fig_RGG}, where the thickness of each edge is proportional to its
weight. The eigenvalues of the weighted adjacency matrix of this network are 
\begin{eqnarray*}
\left\{ \lambda _{i}^{\star }\right\} _{i=1}^{10} &=&\left\{
-0.89, \;-0.85, \;-0.84, \;-0.79, \;-0.77, \;-0.68,\right.  \\
&&\left. -0.61, \;0.02, \;0.27, \;5.16\right\} ,
\end{eqnarray*}%
and the corresponding sequence of spectral moments is 
\begin{equation*}
\left\{ m_{k}^{\star }\right\} _{k>0}=\left\{
0, \;3.11, \;13.45, \;71.60, \;368.36, \;1905,\dots \right\} .
\end{equation*}%
We applied the control law \eqref{eqn_closed_loop} using the cost function %
\eqref{eqn_cost_function} and starting with a different realization of the
random geometric graph on $10$ nodes in the square $\left[ 0,1\right] \times %
\left[ 0,1\right] $. In this case, matching the whole set of spectral moments
directly provides us with a very good approximation of the eigenvalues spectrum. As before, we also
considered the relaxation of Problem~\ref{problem} involving only a
truncated sequence consisting of the first 4 moments. The evolution of the
second, third and fourth moments of the network is shown in Fig. \ref%
{fig_RGG_conv}. The asymptotic values of the spectral moments of the network
are 
\begin{equation*}
\left\{ m_{k}\right\} _{k>0}=\left\{ 0, \;3.13, \;13.46, \;71.63,\dots \right\} ,
\end{equation*}%
which are remarkably close to the desired ones and, as in the previous example, slightly on the larger side. Similarly, the final set of
eigenvalues is%
\begin{eqnarray*}
\left\{ \lambda _{i}^{\star }\right\} _{i=1}^{10} &=&\left\{
-0.94, \;-0.87, \;-0.85, \;-0.83, \;-0.74, \;-0.59,\right.  \\
&&\left. -0.55, \;-0.27, \;0.50, \;5.16\right\} ,
\end{eqnarray*}%
which is a very good fit to the given eigenvalues, especially for those far
from zero.
\end{example}

\begin{figure}[tbp]
\centering
\includegraphics[width=0.9\linewidth]{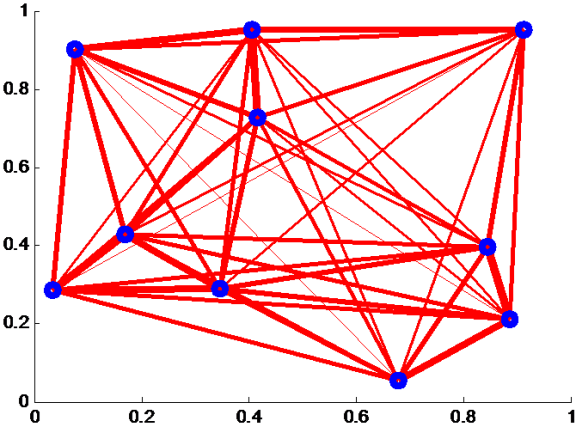}
\caption{Random geometric graph with 10 nodes. The width of each edge is
proportional to its weight.}
\label{fig_RGG}
\end{figure}

%%%%%%%%%%%%%%%%%%%%%%%%%%%%%%%%%%%%%%%%%%%%%%%%%%%%%%%%%%%%%%%%%%%%%%%%%

\section{Conclusions and Future Research}

\label{sec_conclusions_future_research}

In this paper, we proposed a novel control framework to modify the structure
of a mobile robot network in order to control the eigenvalue spectrum of its
adjacency matrix. We introduced a novel abstraction of the eigenvalue
spectrum by means of the spectral moments and
derived explicit gradient descent motion controllers for the robots to obtain a network with the desired set of moments. Since the eigenvalue spectrum is uniquely determined by the associated spectral moments, our approach provides a way of controlling the eigenvalues of mobile networks. Convergence to the desired moments was always guaranteed due to convexity of the proposed cost functions. Efficiency of our approach was illustrated in nontrivial computer simulations. The adjacency
matrix eigenvalue spectrum is relevant to the performance of many
distributed coordination algorithms run over a network. Therefore, our
approach is particularly useful in providing network structures that are
optimal with respect to networked coordination objectives.

Future work involves theoretical guarantees for the Euclidean distance
metric ($z=2$ in (\ref{eqn_adjacency_entries})), as well as extension of our
results to the spectrum of Laplacian matrix of the network. Moreover, the
relaxation of Problem~\ref{problem} to a truncated sequence of moments does
not guarantee (mathematically) a good fit of the complete distribution of
eigenvalues. Therefore, a natural question is to characterize the set of
graphs most of whose spectral information is contained in a relatively small
set of low-order moments.

\begin{figure}[tbp]
\centering
\includegraphics[width=0.9\linewidth]{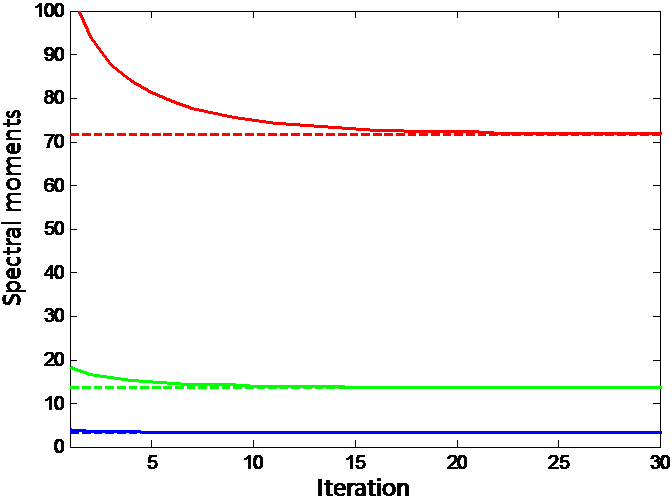}
\caption{Evolution of second (blue), third (green) and fourth (red) spectral
moments towards the moments of the random geometric graph.}
\label{fig_RGG_conv}
\end{figure}

\addtolength{\textheight}{-12cm}

%%%%%%%%%%%%%%%%%%%%%%%%%%%%%%%%%%%%%%%%%%%%%%%%%%%%%%%%%%%%%%%%%%%%%%%%%

\end{document}